\newtheorem{theorem}{Theorem}
\newtheorem{corollary}{Corollary}
\def \sd{\gamma_{s}}
\def \std{\gamma^{s}_{t}}
\def \md{\gamma^{-}}
\begin{document}
\title{Signed and Minus Domination in Complete Multipartite Graphs\footnotemark[1]}


\author{Hongyu Liang\footnotemark[2]}
\renewcommand{\thefootnote}{\fnsymbol{footnote}}

\footnotetext[1]{2010 Mathematics Subject Classification: 05C69. Keywords: signed domination number, signed total domination number, minus domination number, complete multipartite graph}

\footnotetext[2]{Institute for Interdisciplinary Information Sciences,
Tsinghua University, Beijing, China. E-mail: lianghy08@mails.tsinghua.edu.cn}
\date{}
\maketitle

\begin{abstract}
In this paper we determine the exact values of the signed domination number, signed total domination number, and minus domination number of complete multipartite graphs, which substantially generalizes some previous results obtained for special subclasses of complete multipartite graphs such as cliques and complete bipartite graphs.
\end{abstract}

\section{Introduction}
\label{sec:intro} All graphs in this paper are simple and undirected. We generally follow
\cite{gt} for standard notation and terminologies in graph theory.
For a graph $G$, its vertex set and edge set are denoted by $V(G)$ and $E(G)$, respectively. For each $v\in V(G)$,
$N_G(v):=\{u\in V(G)~|~\{u,v\}\in E(G)\}$ is the \emph{open neighborhood} of $v$, and $N_G[v]:=N_G(v)\cup \{v\}$ is the \emph{closed neighborhood} of $v$.
For a function $f:V(G)\rightarrow \mathbb{R}$, define $f(S):=\sum_{v\in S}f(v)$ for all $S\subseteq V(G)$; the \emph{weight} of $f$ is $w(f):=f(V(G))$.


Given a graph $G$, a function $f:V(G)\rightarrow \{-1,1\}$ is called a \emph{signed dominating function} (respectively, \emph{signed total dominating function}) of $G$ if $f(N_G[v])\geq 1$ (respectively, $f(N_G(v))\geq 1$) for all $v\in V(G)$. The \emph{signed domination number} (respectively, \emph{signed total domination number}) of $G$, denoted by $\gamma_s(G)$ (respectively, $\std(G)$), is the minimum weight of a signed dominating function (respectively, signed total dominating function) of $G$. Notice that the signed total domination number is only defined for graphs without isolated vertices. The notion of signed domination and signed total domination have been extensively studied in the literature; see, e.g., \cite{signed_cm96,signed_f96,signed_hw02,h04,signed_m00,ac_xsc05,std_cmj01,md_cmj06} and the references therein.
For a graph $G$, a function $f:V(G)\rightarrow \{-1,0,1\}$ is called a \emph{minus dominating function} of $G$ if $f(N_G[v])\geq 1$ for every $v\in V(G)$. The \emph{minus domination number} of $G$, denoted by $\md(G)$, is the minimum weight of a minus dominating function of $G$. Minus domination has been studied in, e.g., \cite{md_dam01,md_dam96,md_dm96,md_dm99,md_cmj06}.
For a comprehensive treatment on the theory of domination in graphs, the reader is referred to \cite{dom_book2,dom_book}.

The exact values of the signed domination number, signed total domination number, and minus domination number have been determined for some special classes of graphs including complete graphs and complete bipartite graphs. To our knowledge, however, the values of these parameters in a more general class of graphs, namely the class of complete multipartite graphs, have not been decided yet. In this paper we fill this gap by completely determining the values of the three parameters in complete multipartite graphs. Our work substantially generalizes the previously obtained results for complete graphs (note that a complete graph of order $n$ is also a complete $n$-partite graph) and complete bipartite graphs.

\section{Signed (Total) Domination and Minus Domination in Complete Multipartite Graphs}
Let $k\geq 2$ and $n_1,n_2,\ldots,n_k$ be positive integers.
Throughout the paper, $K_{n_1,n_2,\ldots,n_k}$ denotes the complete $k$-partite graph with vertex set $V=V_1\cup V_2\cup\ldots \cup V_k$ and edge set $E$, where $V_i=\{v_{i,j}~|~1\leq j\leq n_i\}$ and $E=\{\{v_{i,j},v_{i',j'}\}~|~i\neq i'; 1\leq j\leq n_i; 1\leq j'\leq n_{i'}\}$.
Let $t$ be the number of $i$'s for which $n_i$ is odd; that is, $t=|\{i~|~1\leq i\leq k; n_i\equiv 1 \pmod 2\}|$. Assume without loss of generality that $n_1,n_2,\ldots,n_{t}$ are odd, whereas $n_{t+1},\ldots,n_{k}$ are even.
Let $I_1=\{i~|~1\leq i\leq t; n_i=1\}$ and $I_{2}=\{i~|~t+1\leq i\leq k; n_i=2\}$.


We first consider the signed domination number.
\begin{theorem}\label{thm:signed}
If $t$ is odd, then
\begin{eqnarray*}
\sd(K_{n_1,n_2,\ldots,n_k})=\left\{
\begin{array}{ll}
1 & \textrm{if~~}t\geq 3 \textrm{~and~} |I_1|\geq \frac{2k-t+1}{2};\\
1+n_2 & \textrm{if~~}t=1, k=2, \textrm{~and~}n_1=1;\\
5 & \textrm{if~~}(t=1, k=2, n_1\geq 5, n_2\geq 4) \textrm{~or~} \\
& ~~~~(t=1, k\geq 3, n_1\neq 3, (\forall 2\leq i\leq k) n_i\geq 4);\\
3 & \textrm{otherwise}.
\end{array}
\right.
\end{eqnarray*}
If $t$ is even, then
\begin{eqnarray*}
\sd(K_{n_1,n_2,\ldots,n_k})=\left\{
\begin{array}{ll}
n_1+n_2 & \textrm{if~~}t=k=2 \textrm{~and~} \min\{n_1,n_2\}=1;\\
6 & \textrm{if~~}t=k=2 \textrm{~and~} \min\{n_1,n_2\}\geq 5;\\
2 & \textrm{if~~}|I_1|+|I_2|\geq \frac{t}{2}+1;\\
4 & \textrm{otherwise}.
\end{array}
\right.
\end{eqnarray*}
\end{theorem}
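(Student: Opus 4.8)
The plan is to reduce everything to the part-sums $a_i:=f(V_i)$. Because $G=K_{n_1,\ldots,n_k}$ is complete multipartite, for $v\in V_i$ the closed neighborhood is $N_G[v]=\{v\}\cup(V\setminus V_i)$, so $f(N_G[v])=f(v)+w-a_i$ where $w:=w(f)=\sum_i a_i$. Hence the signed-domination constraint at $v$ reads $f(v)+w-a_i\ge 1$. Writing $b_i:=w-a_i$, this says $b_i\ge 1-f(v)$ for every $v\in V_i$; consequently a part containing a $-1$ forces $b_i\ge 2$, while an all-$+1$ part has $a_i=n_i$ and needs only $w\ge n_i$ (i.e.\ $b_i\ge 0$). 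Summing $a_i=w-b_i$ over $i$ gives the key identity $\sum_{i=1}^{k} b_i=(k-1)w$, and since each $a_i\equiv n_i\pmod 2$ we get $w\equiv t\pmod 2$. Thus $\sd$ is the minimum over admissible choices of $w$, whose parity is already fixed by $t$.

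Next I would turn this into a feasibility test for each candidate weight $w$. Fixing $w$, each part independently admits a set of attainable slacks $b_i=w-a_i$, governed by $a_i\equiv n_i\pmod 2$, $-n_i\le a_i\le n_i$, and the dichotomy: either the part is all $+1$ (so $a_i=n_i$, requiring $w\ge n_i$) or it contains a $-1$ (so $b_i\ge 2$ and $a_i\le n_i-2$). From this I compute $\beta_i(w):=\min b_i$ for each part type (size-$1$ odd, odd $\ge 3$, size-$2$ even, even $\ge 4$), being careful that the all-$+1$ option is available exactly when $n_i\le w$ and can then make $b_i$ as small as $w-n_i$. A weight $w$ is achievable iff $\sum_i\beta_i(w)\le (k-1)w$ together with a reachability check: starting from the minimizing choice one may raise any $b_i$ in steps of $2$ (flipping a $+1$ to a $-1$), so the attainable values of $\sum_i b_i$ form an arithmetic progression of step $2$ from $\sum_i\beta_i(w)$ up to $\sum_i\max b_i$; one verifies $(k-1)w$ lies in this range with the correct parity.

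Carrying this out for the smallest admissible $w$ yields the stated thresholds. For $t$ odd the first candidate is $w=1$, and $\sum_i\beta_i(1)\le k-1$ reduces exactly to $|I_1|\ge\frac{2k-t+1}{2}$, explaining the value $1$; when this fails one tests $w=3$, whose failure (forcing $5$) happens precisely in the listed configurations with $t=1$, all parts large, and $n_1\neq 3$, while otherwise $w=3$ is realizable. For $t$ even the analogous computation at $w=2$ gives $\sum_i\beta_i(2)\le 2(k-1)\iff |I_1|+|I_2|\ge \frac{t}{2}+1$ (value $2$), with $w=4$ covering the generic remaining case. The degenerate small cases --- stars $K_{1,n_2}$ (value $1+n_2$ or $n_1+n_2$) and the two-odd-part graphs with $t=k=2$ (value $6$ when both parts are $\ge 5$) --- are where the generic reachability argument breaks down because there are too few parts to absorb the required slack, and these I would analyze directly from $b_1+b_2=(k-1)w$.

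The main obstacle is precisely this bookkeeping: proving matching lower and upper bounds in every branch. The upper bound in each case is an explicit construction (choose the $a_i$ attaining $\sum_i\beta_i(w)$ and adjust the slacks to total $(k-1)w$); the lower bound amounts to showing $\sum_i\beta_i(w')>(k-1)w'$ for every smaller admissible $w'$. The delicate points are (i) correctly handling the all-$+1$ option, which lowers $\beta_i(w)$ sharply when $n_i\in\{w-1,w\}$ and is responsible for the special constants, and (ii) the reachability and parity check in the degenerate cases with few or tiny parts, where $\sum_i\max b_i$ is small and the target $(k-1)w$ may be unreachable at the naive candidate weight, forcing the next value.
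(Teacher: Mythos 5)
Your framework is sound and, where you actually carried out the computations, correct; it is also a genuinely different route from the paper's. The paper proves the theorem by a long ad hoc case analysis: lower bounds come from averaging $f(V\setminus V_i)$ over $i$ together with parity and contradiction arguments tailored to each branch, upper bounds come from explicitly exhibited signed dominating functions in each branch, and the $k=2$ cases are outsourced to the known complete-bipartite result. You instead set up an exact feasibility criterion: with $b_i=w-f(V_i)$, a weight $w\equiv t\pmod 2$ is realizable iff the per-part minimum slacks satisfy $\sum_i\beta_i(w)\le (k-1)w$ (the parity and upper-range parts of your reachability check are in fact automatic, since each attainable slack set is an arithmetic progression of step $2$ and $\sum_i b_i\equiv (k-1)w\pmod 2$ follows from $w\equiv t$). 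This reduces the whole theorem to evaluating $\beta_i(w)$ for the four part types at $w=1,2,3,4,5,6$, and your two worked instances check out: $\sum_i\beta_i(1)=3k-t-2|I_1|\le k-1$ is exactly $|I_1|\ge\frac{2k-t+1}{2}$, and $\sum_i\beta_i(2)=2k+t-2|I_1|-2|I_2|\le 2(k-1)$ is exactly $|I_1|+|I_2|\ge\frac{t}{2}+1$; even the ``degenerate'' star and $t=k=2$ cases fall out of the same criterion (e.g.\ for $K_{1,n_2}$ feasibility forces $w=n_2+1$), so they are not actually breakdowns of your method. What your approach buys is uniformity: one lemma replaces all the separate constructions and contradiction arguments, the special constants ($1+n_2$, $5$, $6$) are transparently traced to the all-$+1$ option becoming available at $w\ge n_i$, and no external citation is needed; what the paper's approach buys is that each branch is elementary and self-contained, with concrete functions witnessing every upper bound. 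The one caveat is that your text is still partly a plan --- the branches at $w=3,4,5,6$ and the verification that infeasibility of $w=3$ occurs precisely in the listed configurations are asserted rather than written out --- but given your criterion these are finite mechanical computations, not missing ideas.
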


\begin{proof}
Let $G=K_{n_1,n_2,\ldots,n_k}$ and $f$ be a signed dominating function of $G$ with $w(f)=\sd(G)$.
We first give some observations that will be frequently used in the proof.
For every $1\leq i\leq k$ and $1\leq j\leq n_i$, we have
\begin{equation}\label{equ:neib}
f(N_G[v_{i,j}])=f(v_{i,j})+f(V\setminus V_{i})\geq 1,
\end{equation}
and consequently,
\begin{equation}\label{equ:geq0}
f(V\setminus V_{i})\geq 0.
\end{equation}
Therefore,
\begin{equation*}
\sd(G)=w(f)=f(V)=\frac{1}{k-1}\sum_{i=1}^{k}f(V\setminus V_{i})\geq 0.
\end{equation*}
Thus $w(f)=0$ holds only if $f(V\setminus V_i)=0$ for all $i$. By (\ref{equ:neib}) this implies $f(v_{i,j})=1$ for all $i,j$ and thus $w(f)>0$, a contradiction! Therefore $w(f)\geq 1$. As $w(f)\equiv |V|\equiv t \pmod 2$, we have
\begin{equation}\label{equ:lb}
\sd(G)\geq 1 \textrm{~if~}t \textrm{~is odd, and~}\sd(G)\geq 2 \textrm{~if~}t\textrm{~is even.}
\end{equation}

We now turn to the main part of the proof. First consider the case when $t$ is odd. We perform a case analysis as follows.
\begin{enumerate}
\item $t=1$ and $k=2$. That is, $G=K_{n_1,n_2}$ where $n_1$ is odd and $n_2$ is even. Applying Theorem 1 from \cite{md_cmj06} under different situations gives the following:
\begin{itemize}
\item When $n_1=1$, $\sd(G)=n_2+1$.
\item When $n_1=3$, $\sd(G)=3$.
\item When $n_1\geq 5$ and $n_2=2$, $\sd(G)=3$.
\item When $n_1\geq 5$ and $n_2\geq 4$, $\sd(G)=5$.
\end{itemize}

\item $t=1$ and $k\geq 3$. We first show that $\sd(G)=w(f)\geq 3$. Assume to the contrary that $w(f)<3$. Since $G$ has odd number of vertices, $w(f)$ should be odd, and thus $w(f)=1$ by (\ref{equ:lb}). Fix $i'\in\{2,3,\ldots,k\}$. By (\ref{equ:geq0}) and the fact that $|V_{i'}|\geq 2$, there exists $1\leq j\leq n_{i'}$ for which $f(v_{i',j})=-1$, otherwise $w(f)\geq 2$. This, by (\ref{equ:neib}), indicates that $f(V\setminus V_{i'})\geq 2$. Noting that $|V\setminus V_{i'}|$ is odd, we have $f(V\setminus V_{i'})\geq 3$. Due to the arbitrariess of $i'$, we obtain:
    \begin{equation}\label{equ:3}
    w(f)=\frac{1}{k-1}\sum_{i=1}^{k}f(V\setminus V_{i})\geq \frac{0+3(k-1)}{k-1}=3,
    \end{equation}
    contradicting with our previous assumption of $w(f)<3$. Therefore we have $\sd(G)=w(f)\geq 3$.

    Furthermore, we will prove that $w(f)\geq 5$ if, in addition, $n_1\neq 3$ and $(\forall 2\leq i\leq k)n_i\geq 4$. Suppose to the contrary that $w(f)\leq 3$ in this case. Since $w(f)\geq 3$, we have $w(f)=3$. Analogously to the previous analysis, for each $2\leq i\leq k$ there is $1\leq j\leq n_i$ such that $f(v_{i,j})=-1$, which implies $f(V\setminus V_i)\geq 3$. Thus (\ref{equ:3}) still holds. To achieve the equality, we must have $f(V\setminus V_1)=0$ and $f(V\setminus V_i)=3$ for all $2\leq i\leq k$. This, by (\ref{equ:neib}), implies that $f(v_{1,j})=1$ for all $1\leq j\leq n_1$. Therefore $3=w(f)=f(V_1)=n_1$, which contradicts with our assumption that $n_1\neq 3$. Thus, we have established that $w(f)\geq 5$ when $n_1\neq 3$ and $(\forall 2\leq i\leq k)n_i\geq 4$.

    We next prove that these lower bounds are attainable in respective cases. Consider the function $f':V\rightarrow \{-1,1\}$ defined as follows: Assign $+1$ to $\frac{n_1+1}{2}$ vertices in $V_1$, to $\frac{n_i+2}{2}$ vertices in $V_i$ for $i\in\{2,3\}$, and to $\frac{n_j}{2}$ vertices in $V_j$ for all $3\leq j\leq k$; assign $-1$ to all other vertices in $V$. It is easy to see that $f'(V_1)=1$, $f'(V_2)=f'(V_3)=2$, and $f'(V_j)=0$ for $4\leq j\leq k$. Evidently $f'$ is a signed dominating function of $G$, which has weight 5. Thus $\sd(G)\leq 5$, which is tight for the case where $n_1\neq 3$ and $(\forall 2\leq i\leq k)n_i\geq 4$.
    Now consider the case where $n_1=3$ or $(\exists 2\leq i\leq k)n_i=2$. If $n_1=3$, the function that assigns $+1$ to all vertices in $V_1$ and $\frac{n_i}{2}$ vertices in $V_i$ for all $2\leq i\leq k$, and $-1$ to all other vertices, is a signed dominating function of $G$ of weight 3. If $n_i=2$ for some $i\in\{2,3,\ldots,k\}$, we can construct a signed dominating function of $G$ of weight 3 by assigning $+1$ to $\frac{n_1+1}{2}$ vertices in $V_1$, to both vertices in $V_i$, to $\frac{n_j}{2}$ vertices in $V_j$ for all $j\in\{2,\ldots,k\}\setminus \{i\}$, and assigning $-1$ to all other vertices in $V$. Hence, $\sd(G)=3$ when $n_1=3$ or $n_i=2$ for some $2\leq i\leq k$. This finishes the analysis of the case where $t=1$ and $k\geq 3$.

\item $t\geq 3$.
    First assume that $|I_1|\leq \frac{2k-t-1}{2}$. (Recall that $I_1$ is the set of indices $i$ for which $n_i=1$.) We will show that $\sd(G)\geq 3$. Assume to the contrary that $\sd(G)\leq 1$. For each $1\leq i\leq k$ for which $n_i\geq 2$, by (\ref{equ:geq0}), there exists $1\leq j\leq n_i$ such that $f(v_{i,j})=-1$. According to (\ref{equ:neib}), $f(V)-f(V_i)=f(V\setminus V_i)\geq 2$, and thus $$f(V_i)\leq f(V)-2=\sd(G)-2\leq -1.$$
    When $t+1\leq i\leq k$, $|V_i|=n_i$ is even, and thus the above inequality can be improved to $f(V_i)\leq -2$. Noting that $f(V_j)\leq 1$ for all $j$ such that $n_j=1$, we have:
    \begin{eqnarray*}
    \sd(G)=f(V)\leq |I_1|-(t-|I_1|)-2(k-t)=2|I_1|-2k+t\leq -1,
    \end{eqnarray*}
    which, however, is a contradiction to (\ref{equ:lb}).
    As a consequence, our assumption that $\sd(G)\leq 1$ cannot hold, and thus $\sd(G)\geq 3$. On the other hand, consider the function $f'$ defined as follows: Assign $+1$ to $\frac{n_i+1}{2}$ vertices in $V_i$ for all $1\leq i\leq \frac{t+3}{2}$ (note that $\frac{t+3}{2}\leq t$ since $t\geq 3$), to $\frac{n_i-1}{2}$ vertices in $V_i$ for all $\frac{t+3}{2}< i\leq t$, and to $\frac{n_i}{2}$ vertices in $V_i$ for all $t<i\leq k$; assign $-1$ to all the other vertices in $V$. It is easy to verify that $f'$ is a signed dominating function of $G$ of weight $3$, and thus $\sd(G)\leq 3$. Therefore we have $\sd(G)=3$.

Now suppose $|I_1| \geq \frac{2k-t+1}{2}$. (Note that this implies $t\geq \frac{2k-t+1}{2}$.) Without loss of generality we assume $n_1=n_2=\ldots=n_{\frac{2k-t+1}{2}}=1$. Define a function $f':V\rightarrow \{-1,1\}$ as follows: Assign $+1$ to all the vertices in $\bigcup_{i=1}^{\frac{2k-t+1}{2}}V_i=\{v_{i,1}~|~1\leq i\leq \frac{2k-t+1}{2}\}$, to $\frac{n_i-1}{2}$ vertices in $V_i$ for all $\frac{2k-t+1}{2}<i\leq t$, and to $\frac{n_i-2}{2}$ vertices in $V_i$ for all $t+1\leq i\leq k$; assign $-1$ to all other vertices in $V$. It is easy to verify that $f'(V_i)=1$ for $1\leq i\leq \frac{2k-t+1}{2}, f'(V_i)=-1$ for $\frac{2k-t+1}{2}<i\leq t$, and $f'(V_i)=-2$ for $t<i\leq k$. The weight of $f'$ is
$$f'(V)=\frac{2k-t+1}{2}-(t-\frac{2k-t+1}{2})-2(k-t)=1.$$
For each $1\leq i\leq \frac{2k-t+1}{2}$, $f'(N_G[v_{i,1}])=f'(V)=1$ (recall that $n_i=1$). For each $i>\frac{2k-t+1}{2}$, since $f'(V_i)\leq -1$, we have that for each $1\leq j\leq n_i$, $f'(N_G[v_{i,j}])=f'(v_{i,j})+(f'(V)-f'(V_i))\geq f'(V)=1$. Therefore, $f'$ is a signed dominating function of $G$, implying that $\sd(G)\leq w(f')=1$. By (\ref{equ:lb}), $\sd(G)=1$.
This completes the whole analysis for the case where $t$ is odd.
\end{enumerate}

We next turn to the situation where $t$ is even. We will prove the following:
\begin{equation}\label{equ:even}
\sd(G)\geq 4 \textrm{~~if~~} |I_1|+|I_2|\leq \frac{t}{2}\;.
\end{equation}

Assume that $\sd(G)\leq 2$ while $|I_1|+|I_2|\leq \frac{t}{2}$. By (\ref{equ:lb}) we have $\sd(G)=2$.
For each $1\leq i\leq k$ for which $n_i\geq 3$, by (\ref{equ:geq0}), there exists $1\leq j\leq n_i$ such that $f(v_{i,j})=-1$. According to (\ref{equ:neib}), $f(V\setminus V_i)=f(V)-f(V_i)\geq 2$, and thus
$$f(V_i)\leq f(V)-2=\sd(G)-2=0.$$
    When $n_i$ is odd, the above inequality can be improved to $f(V_i)\leq -1$. Therefore,
    \begin{eqnarray*}
    \sd(G)=w(f)\leq |I_1|-(t-|I_1|)+2|I_2|=2(|I_1|+|I_2|)-t\leq 0,
    \end{eqnarray*}
    contradicting with our assumption that $\sd(G)\leq 2$. Thus the inequality (\ref{equ:even}) is proved.
    We next consider several cases.

\begin{enumerate}
\item $|I_1|+|I_2|\geq \frac{t}{2}+1$. Choose two integers $i_1,i_2$ such that $0\leq i_1\leq |I_1|, 0\leq i_2\leq |I_2|$, and $i_1+i_2=\frac{t}{2}+1$. (Obviously such $i_1,i_2$ exist.) Without loss of generality, we assume that $n_1=n_2=\ldots=n_{i_1}=1$ and $n_{t+1}=n_{t+2}=\ldots=n_{t+i_2}=2$.
Consider the function $f':V\rightarrow \{-1,1\}$ obtained as follows: Assign $+1$ to $\frac{n_i+1}{2}$ vertices in $V_i$ for all $1\leq i\leq i_1$, to $\frac{n_i-1}{2}$ vertices in $V_i$ for all $i_1+1\leq i\leq t$, to $\frac{n_i+2}{2}$ vertices in $V_i$ for all $t+1\leq i\leq t+i_2$, and to $\frac{n_i}{2}$ vertices in $V_i$ for all $t+i_2+1\leq i\leq k$; assign $-1$ to all other vertices in $V$. Thus, $f'(V_i)=1$ for $1\leq i\leq i_1$, $f'(V_i)=-1$ for $i_1+1\leq i\leq t$, $f'(V_i)=2$ for $t+1\leq i\leq t+i_2$, and $f'(V_i)=0$ for $t+i_2<i\leq k$. So,
$$w(f')=i_1-(t-i_1)+2i_2=2(i_1+i_2)-t=2.$$
We now verify that $f'$ is a signed dominating function of $G$.
For each $1\leq i\leq i_1$ and the (only) vertex $v_{i,1}\in V_i$, $f'(N_G[v_{i,1}])=f'(V)=w(f)=2$. For each $i_1+1\leq i\leq t$ and $1\leq j\leq n_i$, $f'(N_G[v_{i,j}])=f'(v_{i,j})+f'(V)-f'(V_i)\geq -1+2-(-1)=3$. For each $t+1\leq i\leq t+i_2$ and $1\leq j\leq n_i(=2)$, $f'(N_G[v_{i,j}])=f'(v_{i,j})+f'(V)-f'(V_i)=1+2-2=1$. Finally, for each $t+i_2+1\leq i\leq k$ and $1\leq j\leq n_i$, $f'(N_G[v_{i,j}])=f'(v_{i,j})+f'(V)-f'(V_i)\geq -1+2=1$. Therefore, $f'$ is a signed dominating function of $G$, and thus $\sd(G)\leq w(f')=2$. By (\ref{equ:lb}) we have $\sd(G)=2$.

\item $|I_1|+|I_2|\leq \frac{t}{2}$.
\begin{enumerate}
\item $t=0$. In this case we can construct a signed dominating function of $G$ of weight 4 as follows: Assign $+1$ to $\frac{n_i+2}{2}$ vertices in $V_i$ for $i\in\{1,2\}$ and $\frac{n_i}{2}$ vertices in $V_i$ for $3\leq i\leq k$; assign $-1$ to all other vertices in $V$. Thus $\sd(G)\leq 4$, and by (\ref{equ:even}) we have $\sd(G)=4$.

\item $t=2$ and $k=2$. By applying Theorem 1 in \cite{md_cmj06} we obtain that
\begin{eqnarray*}
\sd(G)=\left\{
\begin{array}{ll}
n_1+n_2 & \textrm{~~if~~}\min\{n_1,n_2\}=1;\\
4 & \textrm{~~if~~}\min\{n_1,n_2\}=3;\\
6 & \textrm{~~if~~}\min\{n_1,n_2\}\geq 5.
\end{array}
\right.
\end{eqnarray*}
(Note that the condition $|I_1|+|I_2|\leq \frac{t}{2}$ excludes the situation $n_1=n_2=1$, in which case $\sd(G)=2$; nonetheless, this is compatible with the formula $n_1+n_2$.)

\item $t=2$ and $k\geq 3$. Consider the function $f':V\rightarrow \{-1,1\}$ defined as follows: Assign $+1$ to $\frac{n_i+1}{2}$ vertices in $V_i$ for $i\in\{1,2\}$, to $\frac{n_3+2}{2}$ vertices in $V_3$, and to $\frac{n_i}{2}$ vertices in $V_i$ for all $4\leq i\leq k$; assign $-1$ to all other vertices in $V$. Thus $f'(V_1)=f'(V_2)=1$, $f'(V_3)=2$, and $f'(V_i)=0$ for $i\geq 4$. It is easy to verify that $f'$ is a signed dominating function of $G$ of weight 4, and hence $\sd(G)\leq 4$. By (\ref{equ:even}) we have $\sd(G)=4$.

\item $t\geq 4$. Consider the function $f':V\rightarrow \{-1,1\}$ obtained as follows: Assign $+1$ to $\frac{n_i+1}{2}$ vertices in $V_i$ for all $1\leq i\leq \frac{t+4}{2}$, to $\frac{n_i-1}{2}$ vertices in $V_i$ for all $\frac{t+4}{2}<i\leq t$, and to $\frac{n_i}{2}$ vertices in $V_i$ for all $t< i\leq k$; assign $-1$ to all other vertices in $V$. Thus $f'(V_i)=1$ for $1\leq i\leq \frac{t+4}{2}$, $f'(V_i)=-1$ for $\frac{t+4}{2}<i\leq t$, and $f'(V_i)=0$ for $t+1\leq i\leq k$. It is easy to see that $f'$ is a signed dominating function of $G$ of weight 4, and thus $\sd(G)\leq 4$. By (\ref{equ:even}) we have $\sd(G)=4$. This also completes the whole analysis for the case $t$ is even.
\end{enumerate}
\end{enumerate}
The proof of Theorem~\ref{thm:signed} is thus completed.
\end{proof}

Our theorem generalizes Theorem 1 in \cite{md_cmj06}.
The following corollary is also immediate from it.
\begin{corollary}
When $\min\{n_1,n_2,\ldots,n_k\}\geq 2$, we have $1\leq \sd(K_{n_1,n_2,\ldots,n_k})\leq 6$.
\end{corollary}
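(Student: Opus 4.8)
The plan is to read the corollary directly off the explicit formulas of Theorem~\ref{thm:signed}, observing that the hypothesis $\min\{n_1,\ldots,n_k\}\geq 2$ is precisely the statement that no part is a singleton, i.e. $I_1=\emptyset$. First I would record this translation and use it to discard the only two branches of the theorem whose value is not an absolute constant: the branch $t=1$, $k=2$, $n_1=1$ (value $1+n_2$) in the odd-$t$ formula, and the branch $t=k=2$, $\min\{n_1,n_2\}=1$ (value $n_1+n_2$) in the even-$t$ formula. Each of these requires a part of size $1$, and is therefore vacuous when $I_1=\emptyset$.

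With those two branches removed, every surviving case of Theorem~\ref{thm:signed} evaluates to a fixed constant. For $t$ odd the only remaining values are $5$ and $3$; note that the value $1$ requires $t\geq 3$ and $|I_1|\geq\frac{2k-t+1}{2}$, but $\frac{2k-t+1}{2}>0$ (since $t\leq k<2k+1$), so this branch cannot survive $I_1=\emptyset$. For $t$ even the remaining values are $6$, $2$, and $4$. The largest constant occurring is $6$, attained for example by the branch $t=k=2$, $\min\{n_1,n_2\}\geq 5$, which gives $\sd(K_{n_1,\ldots,n_k})\leq 6$ and shows the bound is tight. For the lower bound I would simply invoke inequality~(\ref{equ:lb}) from the proof of Theorem~\ref{thm:signed}, which already yields $\sd(G)\geq 1$ (in fact $\geq 2$ when $t$ is even); equivalently, each surviving constant is at least $1$.

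There is no real difficulty here: the statement is a bookkeeping consequence of the theorem. The only points deserving care are confirming that the two unbounded branches are exactly the ones eliminated by $I_1=\emptyset$, that the value-$1$ branch in the odd case genuinely disappears rather than silently contributing, and that some surviving even-$t$ branch actually attains $6$, so that the upper bound cannot be improved to $5$.
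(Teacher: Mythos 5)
Your proof is correct and matches the paper's approach: the paper states the corollary as immediate from Theorem~\ref{thm:signed}, and your case-by-case reading of its formulas---eliminating the two non-constant branches (which require a part of size $1$) via $I_1=\emptyset$ and checking that every surviving constant lies in $[1,6]$---is exactly that verification. Your extra observations (that the value-$1$ branch vanishes when $I_1=\emptyset$, and that $K_{5,5}$ attains $6$) are accurate, though not needed for the stated bounds.
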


We next deal with the signed total domination number.

\begin{theorem}\label{thm:kpartite}
\begin{eqnarray*}
\std(K_{n_1,n_2,\ldots,n_k})=\left\{
\begin{array}{ll}
3 & \textrm{~~if~~}t~\textrm{is odd};\\
4 & \textrm{~~if~~}t=0;\\
2 & \textrm{~~otherwise}.
\end{array}
\right.
\end{eqnarray*}
\end{theorem}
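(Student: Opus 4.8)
The plan is to reduce the problem to a small integer-optimization problem over the ``part weights'' and then read off all three cases. The crucial observation is that for signed \emph{total} domination the open neighborhood of every vertex $v_{i,j}\in V_i$ equals exactly $V\setminus V_i$, which is independent of $j$. Hence a function $f:V\to\{-1,1\}$ is a signed total dominating function of $G=K_{n_1,\ldots,n_k}$ if and only if $f(V\setminus V_i)\ge 1$ for every $1\le i\le k$. Writing $a_i:=f(V_i)$ and $W:=w(f)=\sum_{i=1}^k a_i$, this condition becomes $a_i\le W-1$ for all $i$. Conversely, a tuple $(a_1,\ldots,a_k)$ is realizable by some $f$ precisely when $a_i\equiv n_i\pmod 2$ and $|a_i|\le n_i$ for each $i$ (split $V_i$ into $(n_i+a_i)/2$ vertices of value $+1$ and $(n_i-a_i)/2$ of value $-1$). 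Thus $\std(G)$ equals the minimum of $W=\sum_i a_i$ over all such tuples subject to $a_i\le W-1$.

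First I would prove the lower bounds by summation. Summing $a_i\le W-1$ over $i$ gives $W\le k(W-1)$, hence $W\ge k/(k-1)>1$ and therefore $W\ge 2$. Since $W\equiv\sum_i n_i\equiv t\pmod 2$, this already yields $\std(G)\ge 3$ when $t$ is odd and $\std(G)\ge 2$ when $t$ is even. The only case needing more is $t=0$: here every $a_i$ and $W$ are even, so $a_i\le W-1$ sharpens to $a_i\le W-2$; summing now gives $W\le k(W-2)$, i.e. $W\ge 2k/(k-1)>2$, and evenness forces $\std(G)\ge 4$.

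Then I would exhibit matching constructions by choosing the $a_i$ explicitly, each choice being automatically realizable since the values used respect the parity and magnitude constraints (using $n_i\ge 2$ for any even part). For $t\ge 2$ even, target $W=2$: set $a_i=+1$ on $t/2+1$ of the odd parts, $a_i=-1$ on the remaining $t/2-1$ odd parts, and $a_i=0$ on every even part. For odd $t\ge 3$, target $W=3$: set $a_i=+1$ on $(t+3)/2$ odd parts, $a_i=-1$ on the other $(t-3)/2$, and $a_i=0$ elsewhere. For $t=1$ (where $k\ge 2$ guarantees an even part with $n_i\ge 2$), target $W=3$: put $a_i=+1$ on the unique odd part, $a_i=+2$ on one even part, and $a_i=0$ on the rest. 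For $t=0$, target $W=4$: set $a_i=+2$ on two parts and $a_i=0$ on the rest. In each case one checks $\max_i a_i\le W-1$, so the tuple is feasible and attains the claimed value, matching the lower bound.

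The main obstacle is the $t=0$ lower bound: the generic averaging argument only delivers $W\ge 2$ (indeed the non-integral bound $2k/(k-1)$), which is short of the truth. The fix is the parity upgrade from $a_i\le W-1$ to $a_i\le W-2$, available precisely because all parts are even, after which the same summation gives $W>2$ and hence $W\ge 4$. Everything else is routine bookkeeping: confirming $W\equiv t\pmod 2$ and verifying that each explicit tuple respects $|a_i|\le n_i$ together with the feasibility inequality $a_i\le W-1$.
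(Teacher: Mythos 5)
Your proposal is correct and follows essentially the same route as the paper: the lower bounds come from summing the open-neighborhood inequalities $f(V\setminus V_i)\ge 1$ over all parts together with parity refinements (your upgrade $a_i\le W-2$ when $t=0$ is exactly the paper's observation that $f(V\setminus V_i)\ge 2$ whenever $|V\setminus V_i|$ is even), and your explicit choices of part sums $a_i$ coincide with the paper's constructions in every case. Your repackaging as an integer optimization over the tuple $(a_1,\ldots,a_k)$ with the realizability conditions $a_i\equiv n_i \pmod 2$, $|a_i|\le n_i$ is a tidy presentational abstraction, but not a substantively different argument.
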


\begin{proof}
Let $G=K_{n_1,n_2,\ldots,n_k}$ and $f$ be a signed total dominating function of $G$ with $w(f)=\std(G)$.
Observe that $f(N(v_{i,1}))=f(V\setminus V_i)$ for each $i\in\{1,2,\ldots,k\}$. We investigate the following cases.
\begin{enumerate}
\item $t$ is odd. For every $i\in\{t+1,\ldots,k\}$, we have $f(V\setminus V_i)=f(N(v_{i,1}))\geq 1.$
For every $i\in\{1,2,\ldots, t\}$, since $|V\setminus V_i|$ is even, we have
$f(V\setminus V_i)\geq 2.$
Summing up the $k$ inequalities for $i=1,2,\ldots,k$ and noting that $\std(G)=f(V)=\frac{1}{k-1}\sum_{i=1}^{k}f(V\setminus V_i)$, we obtain that
$$\std(G)\geq \frac{(k-t)+2t}{k-1}= \frac{k+t}{k-1}>1.$$

As $|V|$ is odd, there is $\std(G)\geq 3$. We now prove that $\std(G)\leq 3$. Consider two further subcases:

\begin{enumerate}
\item $t=1$. Define a function $f':V\rightarrow \{-1,1\}$ by assigning
$+1$ to $\frac{n_1+1}{2}$ vertices in $V_1$, $\frac{n_2+2}{2}$ vertices in $V_2$, $\frac{n_i}{2}$ vertices in $V_i$ for all $3\leq i\leq k$, and assigning $-1$ to all other vertices in $V$.
It is easy to verify that $f'(V_1)=1$, $f'(V_2)=2$, and $f'(V_i)=0$ for $3\leq i\leq k$. Therefore, $f'$ is a signed total dominating function of $G$ of weight 3, and hence $\std(G)\leq 3$.

\item $t\geq 3$.
Define a function $f':V\rightarrow \{-1,1\}$ as follows:
Assign $+1$ to $\frac{n_i-1}{2}$ vertices in $V_i$ for all $1\leq i\leq \frac{t-3}{2}$, to $\frac{n_i+1}{2}$ vertices in $V_i$ for all $\frac{t-3}{2}<i\leq t$, and to $\frac{n_i}{2}$ vertices in $V_i$ for all $t+1\leq i\leq k$; assign $-1$ to all other vertices in $V$. Then, $f'(V_i)=0$ for all $t+1\leq i\leq k$, and among the $t$ values $f'(V_1),\ldots,f'(V_t)$, exactly $\frac{t-3}{2}$ of them are $-1$ and the others are all $+1$. It is thus easy to check that $f'$ is a signed total dominating function of $G$ of weight 3.
Hence $\std(G)\leq 3$.
\end{enumerate}

Combining parts (a) and (b), we have shown that $\std(G)\leq 3$, and thus $\std(G)=3$ when $t$ is odd.

\item $t$ is even.
For every $i\in\{1,2,\ldots, t\}$,
$f(V\setminus V_i)=f(N_G(v_{i,1}))\geq 1.$
For every $i\in\{t+1,\ldots,k\}$, since $|V\setminus V_i|$ is even, we have
$f(V\setminus V_i)\geq 2.$
Summing up the $k$ inequalities for $i=1,\ldots,k$, we obtain:
\begin{equation}\label{equ:1}
\std(G)=f(V)=\frac{1}{k-1}\sum_{i=1}^{k}f(V\setminus V_i)\geq \frac{2k-t}{k-1}>1,
\end{equation}
and thus $\std(G)\geq 2$.


Consider the following two subcases:
\begin{enumerate}
\item $t\geq 2$. Define a function $f':V\rightarrow \{-1,1\}$ as follows:
Assign $+1$ to $\frac{n_i-1}{2}$ vertices in $V_i$ for all $1\leq i\leq \frac{t-2}{2}$, to $\frac{n_i+1}{2}$ vertices in $V_i$ for all $\frac{t-2}{2}<i\leq t$, and to $\frac{n_i}{2}$ vertices in $V_i$ for all $t+1\leq i\leq k$. Assign $-1$ to all other vertices in $V$. Then, $f'(V_i)=0$ for all $t+1\leq i\leq k$, and among the $t$ values $f'(V_1),\ldots,f'(V_t)$, exactly $\frac{t-2}{2}$ of them are $-1$ and the others are all $+1$. It is easy to verify to $f'$ is a signed total dominating function of $G$ of weight 2, implying $\std(G)\leq 2$. Since $\std(G)\geq 2$ by (\ref{equ:1}), we have $\std(G)=2$.

\item $t=0$. Due to (\ref{equ:1}) we have $\std(f)\geq \frac{2k}{k-1}>2$. Since $|V|$ is even, it holds that $\std(f)\geq 4$. A signed total dominating function of $G$ of weight 4 can be obtained by assigning $+1$ to $\frac{n_i+2}{2}$ vertices in $V_i$ for all $i\in\{1,2\}$ and $\frac{n_i}{2}$ vertices in $V_i$ for all $3\leq i\leq k$, and assigning $-1$ to all other vertices in $V$. Thus $\std(G)=4$.
\end{enumerate}
\end{enumerate}

The proof of Theorem~\ref{thm:kpartite} is thus completed.
\end{proof}

Theorem~\ref{thm:kpartite} generalizes Propositions 1 and 4 in \cite{std_cmj01}. (We remark that Proposition 1 in \cite{std_cmj01} has a mistake: It should be that $\std(K_n)=3$ when $n$ is odd and at least 3.)

Finally we turn to the case of minus domination.

\begin{theorem}\label{thm:minus}
\begin{eqnarray*}
\md(K_{n_1,n_2,\ldots,n_k})=\left\{
\begin{array}{ll}
1 & \textrm{~~if~~}n_i=1\textrm{~for some~}i\in\{1,2,\ldots,k\};\\
2 & \textrm{~~otherwise}.\\
\end{array}
\right.
\end{eqnarray*}
\end{theorem}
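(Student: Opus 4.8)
The plan is to follow the template of the proof of Theorem~\ref{thm:signed}, keeping in mind one structural difference: for a minus dominating function the weight $w(f)$ need not share the parity of $|V|$ (since $f$ may take the value $0$), so the parity bound used there cannot be invoked and every lower bound must be argued directly.

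First I would note that (\ref{equ:neib}) and (\ref{equ:geq0}) hold verbatim for any minus dominating function $f$, because their proofs use only $f(N_G[v_{i,j}])=f(v_{i,j})+f(V\setminus V_i)\geq 1$ together with $f(v_{i,j})\leq 1$, both valid for $f\colon V\to\{-1,0,1\}$. Averaging as before gives $w(f)=\frac{1}{k-1}\sum_{i=1}^{k}f(V\setminus V_i)\geq 0$; and $w(f)=0$ would force every $f(V\setminus V_i)=0$, whence (\ref{equ:neib}) yields $f(v_{i,j})=1$ for all $i,j$ and $w(f)=|V|>0$, a contradiction. Hence $\md(G)\geq 1$ unconditionally.

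For the case in which some $n_i=1$, say $n_1=1$, the singleton $v_{1,1}$ satisfies $N_G[v_{1,1}]=V$, so its constraint is simply $w(f)\geq 1$, recovering the lower bound. For the matching upper bound I would take the function assigning $+1$ to $v_{1,1}$ and $0$ to every other vertex: it has weight $1$, the constraint at $v_{1,1}$ is met with equality, and for every $i\geq 2$ and every $j$ we get $f(N_G[v_{i,j}])=0+f(V\setminus V_i)=w(f)-f(V_i)=1-0=1$. Thus $\md(G)=1$ here.

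Finally, assume $n_i\geq 2$ for all $i$. The upper bound $\md(G)\leq 2$ is witnessed by placing $+1$ on one vertex of $V_1$ and one vertex of $V_2$ and $0$ elsewhere; then every vertex has nonnegative value and $f(V\setminus V_i)\in\{1,2\}$, so each closed-neighborhood sum is at least $1$. The crux is the lower bound $\md(G)\geq 2$. Suppose $w(f)=1$ for some minus dominating $f$. Substituting $f(V\setminus V_i)=w(f)-f(V_i)=1-f(V_i)$ into (\ref{equ:neib}) yields $f(v_{i,j})\geq f(V_i)$ for every $j$; summing over the $n_i$ vertices of $V_i$ gives $f(V_i)\geq n_i\,f(V_i)$, i.e.\ $(n_i-1)f(V_i)\leq 0$, so $f(V_i)\leq 0$ because $n_i\geq 2$. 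Summing over $i$ then forces $w(f)=\sum_i f(V_i)\leq 0$, contradicting $w(f)=1$. Extracting $f(V_i)\leq 0$ from the per-vertex inequalities — the single point where the hypothesis $n_i\geq 2$ is essential — is the only nonroutine step; everything else is bookkeeping. Combining the two bounds gives $\md(G)=2$, completing the proof.
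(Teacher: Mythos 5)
Your proposal is correct and follows essentially the same route as the paper: the same case split on whether some $n_i=1$, the identical weight-$1$ and weight-$2$ certificate functions for the upper bounds, and a lower bound by contradiction built on the identity $f(N_G[v_{i,j}])=f(v_{i,j})+f(V\setminus V_i)\geq 1$. The only difference is a micro-step in the lower bound: the paper finds a vertex $v\in V_i$ with $f(v)\leq 0$ (possible since $n_i\geq 2$ and $f(V)\leq 1$) to conclude $f(V\setminus V_i)\geq 1$ and then averages over $i$, whereas you sum the per-vertex inequalities $f(v_{i,j})\geq f(V_i)$ to get $(n_i-1)f(V_i)\leq 0$, hence $f(V_i)\leq 0$ for every $i$ --- an equivalent derivation of the same intermediate fact, and your preliminary unconditional bound $\md(G)\geq 1$ legitimately lets you assume $w(f)=1$ exactly.
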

\begin{proof}
Let $G=K_{n_1,n_2,\ldots,n_k}$ and $f$ be a minus dominating function of $G$ of weight $\md(G)$. If $n_i=1$ for some $i\in\{1,2,\ldots,k\}$, then $\md(G)=f(V)=f(N[v_{i,1}])\geq 1$. On the other hand, the function $f^*$ defined by $f^*(v_{i,1})=1$ and $f^*(v)=0$ for all $v\in V\setminus \{v_{i,1}\}$ is a minus dominating function of $G$ of weight 1. Therefore, $\md(G)=1$ in this case. We will assume in what follows that $n_i\geq 2$ for all $1\leq i\leq k$.

First, observe that the function $f^*$, defined by $f^*(v_{1,1})=f^*(v_{2,1})=1$ and $f^*(v)=0$ for all $v\in V\setminus \{v_{1,1},v_{2,1}\}$, is a minus dominating function of $G$ of weight 2. Thus $\md(G)\leq 2$. We assume that $\md(G)\leq 1$, and thus $f(V)\leq 1$. Fix an $i\in\{1,2,\ldots,k\}$. We have $f(v_{i,1})+f(V\setminus V_i)=f(N_G[v_{i,1}]))\geq 1$, and hence $f(V\setminus V_i)\geq 0$. Since $n_i\geq 2$ and $f(V)\leq 1$, there exists $v\in V_i$ such that $f(v)\leq 0$. Therefore, $f(V\setminus V_i)=f(N_G[v])-f(v)\geq 1$. Due to the arbitrariness of $i$, we get:
\begin{eqnarray*}
f(V)=\frac{1}{k-1}\sum_{i=1}^{k}f(V\setminus V_i)\geq \frac{k}{k-1}>1,
\end{eqnarray*}
contradicting with the fact that $f(V)\leq 1$. Thus, we have $\md(G)\geq 2$. Since we have proved $\md(G)\leq 2$ before, it holds that $\md(G)=2$. This completes the proof of Theorem~\ref{thm:minus}.
\end{proof}

Theorem~\ref{thm:minus} generalizes Theorem 1 in \cite{md_dm96} and Theorem 2 in \cite{md_cmj06}.

\section*{Acknowledgements}
This work was supported in part by the
National Basic Research Program of China Grant 2007CB807900,
2007CB807901, and the National Natural Science Foundation of China Grant
61033001, 61061130540, 61073174.

\bibliographystyle{plain}
\bibliography{multi}

\end{document}